







\documentclass[twocolumn]{autart}    
\usepackage{graphicx}
\usepackage{amsmath} 
\usepackage{amsthm}
\usepackage{amssymb} 
\usepackage{algorithm} 
\usepackage{algorithmic}
\usepackage{mathtools}
\usepackage[colorlinks=true,linkcolor=blue,citecolor=blue,urlcolor=blue]{hyperref}
\usepackage{url,cite} 
\usepackage{tikz}

\newcommand{\by}{\mathbf{y}}
\newcommand{\bY}{\mathbf{Y}}
\newcommand{\bX}{\mathbf{X}}

\newtheorem{theorem}{Theorem}
\newtheorem{proposition}{Proposition}
\newtheorem{remark}{Remark}

\newcommand{\rank}{\mathop{\text{\normalfont rank}}}
\newcommand{\spn}{\mathop{\text{\normalfont span}}}

\newcommand{\st}{\mathop{\text{\normalfont s.t.}}}
\newcommand{\row}{\mathop{\text{\normalfont row}}}

\begin{document}

\begin{frontmatter}

\title{Unifying Theorems for Subspace Identification \\
and Dynamic Mode Decomposition}
\author[Madison]{Sungho Shin}\ead{sungho.shin@wisc.edu},    
\author[Madison]{Qiugang Lu}\ead{glu67@wisc.edu},               
\author[Madison]{Victor M. Zavala}\ead{victor.zavala@wisc.edu}  

\address[Madison]{Department of Chemical and Biological Engineering, University of Wisconsin-Madison, Madison, WI 53706 USA} 
          
\begin{keyword}                      
  System identification, subspace methods, dynamic mode decomposition, optimization
\end{keyword}

\begin{abstract}
  This paper presents unifying results for subspace identification (SID) and dynamic mode decomposition (DMD) for autonomous dynamical systems. We observe that SID seeks to solve an optimization problem to estimate an extended observability matrix and a state sequence that minimizes the prediction error for the state-space model. Moreover, we observe that DMD seeks to solve a rank-constrained matrix regression problem that minimizes the prediction error of an extended autoregressive model. We prove that existence conditions for perfect (error-free) state-space and low-rank extended autoregressive models are equivalent and that the SID and DMD optimization problems are equivalent. We exploit these results to propose a SID-DMD algorithm that delivers a provably optimal model and that is easy to implement. We demonstrate our developments using a case study that aims to build dynamical models directly from video data.
\end{abstract}

\end{frontmatter}

\section{Introduction}

This paper considers the problem of identifying an autonomous dynamical model from a sequence of output (observable) data by using subspace identification (SID) and dynamic mode decomposition (DMD). 

SID seeks to identify a dynamical model in state-space form from output sequence data \cite{van2012subspace,qin2006overview}. Since state data are not available (the states are unknown), SID adopts a sequential approach wherein an extended observability matrix and the state sequence are first identified from the output data and these quantities are then used to identify the state-space system matrices. The term {\it subspace} arises from the fact that the state sequence is identified from a subspace defined by a delay-embedded output sequence \cite{van2012subspace}. A number of different SID algorithms have been proposed in the literature such as PC, UPC, and CVA \cite{arun1990balanced} (for autonomous systems) and N4SID \cite{van1994n4sid}, MOSEP \cite{verhaegen1994identification}, and CVA \cite{larimore1990canonical} (for non-autonomous systems). In seminal work, Van Overschee and De Moor established a {\em unifying theorem}, which indicates that the only defining feature of such methods is the weighting scheme used for singular value decomposition (SVD) \cite{van1995unifying}. SID has been the dominant paradigm for state-space dynamic model identification in industrial applications \cite{favoreel2000subspace}. Interestingly, optimality properties for SID models have not been explored in the literature; specifically, existing SID approaches are often justified based on geometric interpretations (e.g.,  \cite[Chapter 2]{van2012subspace}) and not on optimization/regression interpretations.  

DMD is an identification method that has recently gained considerable attention in the literature as it provides a scalable approach to deal with high-dimensional state spaces (as those arising in computational fluid dynamics and partial differential equations)  \cite{schmid2010dynamic,tu2013dynamic,kutz2016dynamic}. In DMD, a low-rank dynamical model is obtained by solving a linear regression problem (which can be solved by using SVD) and the rank is constrained via truncation of SVD. In recent work, an alternative rank-constrained regression formulation for DMD is proposed and a closed-form solution for such a problem was derived  \cite{heas2016low}. This result implies that external truncation only delivers a suboptimal solution. The models resulting from DMD are often expressed in terms of {\em modes},  which represent coherent structures and associated timescales (this information is of high value in understanding spatiotemporal phenomena). Connections between DMD and Koopman operator theory have also been established in  \cite{lasota2013chaos,schmid2010dynamic,tu2013dynamic}. Unlike SID, however, the model order of DMD is assumed to be equal to the order of a delay-embedded observable (i.e., low-order state space is not formally constructed). As a result, DMD models do not provide a direct low-order state-space representation (unless the states are assumed equal to the outputs). Connections between SID and DMD have only been addressed superficially in \cite{proctor2016dynamic,arbabi2017ergodic}; to the best of our knowledge, no theoretical results have been established. 

This paper provides a theoretical treatment that unifies SID and DMD. We establish optimization problem formulations for SID and DMD. We show that existence conditions for perfect models (in the sense that they fit the data without errors) are equivalent and that both optimization problems are equivalent. We exploit the equivalence between SID and DMD to derive a method (that we call SID-DMD) that computes a state-space model from output data. Notably, the computed model has provable optimality properties (thus overcoming deficiencies of existing SID approaches) and is easy to implement (via SVD). Although SID and DMD algorithms can be used to identify non-autonomous systems, in this paper we aim to make a first attempt to unify these approaches by focusing on autonomous systems.

The remainder of the paper is organized as follows: In Section \ref{sec:unifying} we introduce optimization formulations for SID and DMD and equivalence theorems. In Section \ref{sec:alg} we propose the SID-DMD approach and properties of its solution. Numerical results for a case study using video data are presented in Section \ref{sec:num}.

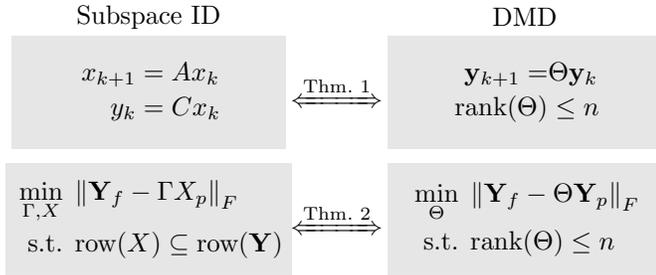
\begin{figure}[!htp]
    \centering
  \begin{tikzpicture}
    \node at (-2.5,0) {Subspace ID};
    \node at (2.5,0) {DMD};
    \node[fill,fill opacity=0.1,text opacity=1,minimum width=3.65cm,minimum height=1.5cm] at (-2.5,-1) {$ \begin{aligned}x_{k+1} &= Ax_k\\ y_{k} &= Cx_k\end{aligned}$};
    \node[fill,fill opacity=0.1,text opacity=1,minimum width=3.65cm,minimum height=1.5cm] at (2.5,-1) {$\begin{aligned}\by_{k+1} =& \Theta \by_{k}\\ \rank(\Theta)&\leq n\end{aligned}$};
    \node at (0,-1) {$\xLeftrightarrow{\;\text{Thm. \ref{thm:equiv-1}\;}}$};
    
    \node[fill,fill opacity=0.1,text opacity=1,minimum width=3.65cm,minimum height=1.5cm] at (-2.5,-2.7) { $ \begin{aligned}\min_{{\Gamma},X}\;& \left\Vert \bY_f - {\Gamma} X_p\right\Vert_F\\\st\;& \row(X) \subseteq \row(\bY)\end{aligned}$};
    \node[fill,fill opacity=0.1,text opacity=1,minimum width=3.65cm,minimum height=1.5cm] at (2.5,-2.7) { $\begin{aligned}\min_{\Theta}\;& \left\Vert \bY_f - \Theta  \bY_p \right\Vert_F\\\st\;&\rank(\Theta)\leq n\end{aligned}$};
    \node at (0,-2.7) {$\xLeftrightarrow{\;\text{Thm. \ref{thm:equiv-2}\;}}$};
  \end{tikzpicture}
  \caption{A schematic summary of unifying results.}\label{fig:unifying}
\end{figure}

\section{Unifying Results}\label{sec:unifying}

This section presents unifying theorems for SID and DMD; a summary of these results is shown in Fig. \ref{fig:unifying}.  We begin the discussion by defining some basic notation. The set of real numbers and integers are denoted by $\mathbb{R}$ and $\mathbb{I}$. By default, we consider vectors as column vectors and use syntax $[\xi_1;\cdots; \xi_n]=[ \xi_1^\top\;\cdots\; \xi_n^\top]^\top$. The submatrix of $ \xi$ with row indexes $i_1,\cdots,i_2$ and column indexes $j_1,\cdots,j_2$ is denoted by $ \xi[i_1:i_2,j_1:j_2]$. Moore-Penrose pseudoinverses are denoted by $(\cdot)^\dag$ and Frobenius norms are denoted by $\Vert \cdot \Vert_F$. We assume that a sequence of observable output data $\{y_k\in\mathbb{R}^m\}_{k=i}^j$ with $i,j\in\mathbb{I}\cup\{\pm\infty\}$ is available; if $i<j$ are finite, we can construct block-Hankel data matrices by embedding a time delay of order $s\in\mathbb{I}_{>0}$ as:
\begin{align*}
  \bY :=&
  \begin{bmatrix}
    y_{i}&y_{i+1}&\cdots&y_{j-s+1}\\
    y_{i+1}&y_{i+2}&\cdots&y_{j-s+2}\\
    \vdots&\vdots&\ddots&\vdots\\
    y_{i+s-1}&y_{i+s}&\cdots&y_{j}
  \end{bmatrix}
  =
  \begin{bmatrix}
    \by_i&\by_{i+1}&\cdots&\by_{j-s+1}
  \end{bmatrix}\\
  \bY_p&:=\bY[:,1:\ell]
  =\begin{bmatrix}
  \by_i&\by_{i+1}&\cdots&\by_{j-s},
  \end{bmatrix}\\
  \bY_f&:=\bY[:,2:\ell+1]
  =\begin{bmatrix}
  \by_{i+1}&\by_{i+2}&\cdots&\by_{j-s+1}
  \end{bmatrix}.
\end{align*}
where $\by_k:=[y_k;y_{k+1};\cdots;y_{k+s-1}]$; $\ell:=j-i-s+1$. 

\subsection{Subspace Identification}
SID aims to identify a state-space model of order $n\in\mathbb{I}_{>0}$ (user-defined) of the form:
\begin{subequations}\label{eqn:ss}
  \begin{align}
     x_{k+1} &= Ax_{k}+w_k\\
    y_{k} &= Cx_k+v_k,
  \end{align}
\end{subequations}
where $x_k\in\mathbb{R}^n$ is the unknown (hidden) state, $w_k\in\mathbb{R}^n$ is the state prediction error, $v_k\in\mathbb{R}^{m}$ is the output prediction error, and $A\in\mathbb{R}^{n\times n}$ and $C\in\mathbb{R}^{m\times n}$ are the system matrices.

Direct estimation of $(A,C)$ from output data is challenging because one must simultaneously estimate the state sequence. Doing this explicitly would require solving a nonconvex optimization problem \cite{mckelvey1995identification}. SID seeks to avoid this by indirectly estimating $(A,C)$; to see how this is done, we first observe that the output can be predicted using the state and extended observability matrix as:
\begin{align}\label{eqn:exob}
  \by_{k} = \Gamma_s x_k + e_k
\end{align}
where $\Gamma_s:=[C;CA;\cdots;CA^{s-1}]$ and $e_k\in\mathbb{R}^{ms}$ is the prediction error. Moreover, we observe that the state sequence can be constructed from the delay-embedded output sequence as: $x_{k} = A\Gamma_{s}^\dag \by_{k-1} + f_k$, where $f_k\in\mathbb{R}^{ms}$ is the prediction error. This allows us to assume that the sequence $\{x_{k+1}\}_{k=i}^{j-s+1}$ can be found from the subspace defined by the row space formed by $\{\by_{k}\}_{k=i}^{j-s+1}$. Accordingly, we observe that the SID problem can be cast as the optimization problem:
\begin{subequations}\label{eqn:sid}
  \begin{align}
    \min_{{\Gamma},X}\;& \left\Vert \bY_f - {\Gamma} X_p\right\Vert_F\\
    \st\;& \row(X) \subseteq \row(\bY),
  \end{align}
\end{subequations}
where $X:=[x_{i+1}\;\cdots\;x_{j-s+2}]$; $X_p:=X[:,1:\ell]$ and $\row(\cdot)$ represents the row space of a matrix. The existence of a solution follows from Proposition \ref{prop:n4sid} (presented later). As is well-known, SID problems are ill-poised;  observe that, if $(\Gamma^*,X^*)$ is a solution, $(\Gamma^* T^{-1},TX^*)$ is a solution for any nonsingular $T\in\mathbb{R}^{n\times n}$. Therefore, it suffices to obtain $(\Gamma,X)$ and subsequently $(A,C)$ up to a similarity transformation.

\begin{remark}
Problem \eqref{eqn:sid} aims to minimize the prediction error for \eqref{eqn:exob} while enforcing the state sequence to lie in the subspace defined by the row space of $\bY$. It is important to note that existing SID algorithms have effectively solved \eqref{eqn:sid} (we show this in Appendix \ref{apx:sid}), but the optimization problem \eqref{eqn:sid} has not been stated explicitly in the literature (to the best of our knowledge). Existing SID approaches are often justified based on geometric interpretations (e.g., see \cite[Chapter 2]{van2012subspace}) and not on optimization/regression interpretations. 
\end{remark}

\begin{remark}
In conventional SID algorithms, the data matrices $\bY_p$ and $\bY_f$ are constructed in a way that the data in the corresponding columns are not overlapping with each other (e.g., by choosing $\bY_p:=\bY[1:\ell-s+1]$ and $\bY_f:=[s+1:\ell+1]$). Here, we have modified this by allowing data overlaps in order to ensure consistency with DMD models.  Moreover, in conventional methods, $(A,C)$ are identified by using the estimated $({\Gamma},X)$; for instance, one can use $A:=\Gamma[1:m(s-1),:]^\dag \Gamma[m+1:ms,:]$ and $C:=\Gamma[1:m,:]$. The work in \cite{van2012subspace} provides an overview of other approaches that use the state sequence $X$.
\end{remark}

\subsection{Dynamic Mode Decomposition}
DMD aims to identify a dynamic mapping (shift) between past and future states. When the full state is not observed, models are constructed by using a time-delay embedding \cite{kutz2016dynamic,arbabi2017ergodic}. Such a technique is related to the classical Takens theorem. By embedding a time-delay with order $s$, the model can be represented as:
\begin{align}\label{eqn:exar}
  \by_{k+1} = \Theta \by_{k} + g_k,
\end{align}
where $g_k\in\mathbb{R}^{ms}$ is the error. We call \eqref{eqn:exar} an {\it extended autoregressive} (AR) model, since each block row takes the form of an AR model. In typical applications such as fluid dynamics, the output data is high-dimensional and it is thus desired to induce simplicity in the identified model. This is done by constraining the rank of the mapping $\Theta$; specifically, we enforce $\rank(\Theta)\leq n$ for a given $n\in\mathbb{I}_{>0}$. We can thus see that DMD seeks to find a mapping that solves the optimization problem:
\begin{subequations}\label{eqn:lrank}
  \begin{align}
    \min_{\Theta}\;& \left\Vert \bY_f - \Theta  \bY_p \right\Vert_F\label{eqn:lrank-obj}\\
    \st\;&\rank(\Theta)\leq n.\label{eqn:lrank-con}
  \end{align}
\end{subequations}
This problem finds the mapping that minimizes the prediction error of \eqref{eqn:exar} (measured by the Frobenius norm) while satisfying the rank constraint of the mapping. The existence of a solution to this problem follows from Proposition \ref{prop:lrank}(a) (presented later). \\

\begin{remark} DMD was originally proposed in the literature by assuming a form much simpler form than \eqref{eqn:lrank} \cite{schmid2010dynamic}. Specifically, the notion of time-delay embedding was not formally introduced (states were assumed to be observable) and the rank condition for $\Theta$ was not directly enforced as constraints. Time-delay embedding was recently introduced in \cite{kutz2016dynamic} and in the Hankel-DMD framework of \cite{arbabi2017ergodic}. To enforce low rank, truncated SVD has been typically performed on the data matrix $\bY_p$  \cite{kutz2016dynamic}. Note that this approach delivers a suboptimal solution to \eqref{eqn:lrank}. To find an optimal low-rank mapping $\Theta$, one needs to directly find the solution of this problem. In recent work, DMD was formulated as a rank-constrained regression problem and a closed-form solution was derived \cite{heas2017optimal}. 
\end{remark}

\subsection{Equivalence Theorems}

We now present equivalence theorems for SID and DMD. The first theorem states that underlying model assumptions of SID and DMD are equivalent (the existence of a model that perfectly fits \eqref{eqn:ss} is equivalent to the existence of a model that perfectly fits \eqref{eqn:exar}). The second theorem states that the optimization problems of SID and DMD are equivalent (a solution of \eqref{eqn:sid} can be obtained from a solution of \eqref{eqn:lrank} and viceversa). 
\\

\begin{theorem}\label{thm:equiv-1}
  Given $\{y_k\}_{k=-\infty}^{\infty}$ and $n,s\in\mathbb{I}_{>0}$, the following statements are equivalent.\\
  (a) There exists $\{x_k\in\mathbb{R}^n\}_{k=-\infty}^{\infty}$ and $(A, C)$ with observability index not greater than $s$ such that \eqref{eqn:ss} is satisfied with $w_k=0$ and $v_k=0$ for $k\in\mathbb{I}$.\\
  (b) There exists $\Theta\in\mathbb{R}^{ms \times m s}$ with $\rank(\Theta)\leq n$ such that \eqref{eqn:exar} is satisfied with $g_k=0$  for $k\in\mathbb{I}$.
\end{theorem}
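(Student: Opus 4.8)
The plan is to prove the two implications separately, using the delay-embedding identity $\by_k=\Gamma_s x_k$ as the bridge between the state-space and extended-AR descriptions. The observability-index hypothesis in (a) is precisely the statement that $\Gamma_s=[C;CA;\cdots;CA^{s-1}]$ has full column rank $n$, so that $\Gamma_s^\dag\Gamma_s=I_n$; this left-invertibility is what lets me pass back and forth between the $n$-dimensional state $x_k$ and the $ms$-dimensional embedded output $\by_k$. For (a)$\Rightarrow$(b) I would first stack \eqref{eqn:exob} to confirm $\by_k=\Gamma_s x_k$ for all $k$, then use left-invertibility to write $x_k=\Gamma_s^\dag\by_k$ and iterate the state recursion: $\by_{k+1}=\Gamma_s x_{k+1}=\Gamma_s A x_k=\Gamma_s A\Gamma_s^\dag\by_k$. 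Setting $\Theta:=\Gamma_s A\Gamma_s^\dag$ then yields a perfect model \eqref{eqn:exar}, and $\rank(\Theta)\le\rank(A)\le n$ since the inner factor $A$ is $n\times n$. This direction is essentially a one-line construction once the identity and the left inverse are in place.

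For (b)$\Rightarrow$(a) the idea is to read a realization off an invariant subspace. I would set $V:=\spn\{\by_k:k\in\mathbb{I}\}$ and note that each $\by_k=\Theta\by_{k-1}$ lies in the column space of $\Theta$, so $r:=\dim V\le\rank(\Theta)\le n$, and that $\Theta V\subseteq V$ because $\Theta\by_k=\by_{k+1}\in V$. Choosing a full-rank basis matrix $U\in\mathbb{R}^{ms\times r}$ for $V$, every $\by_k=Uz_k$ with $z_k:=U^\dag\by_k$, and these coordinate vectors span $\mathbb{R}^r$ by construction of $V$. Invariance yields $\Theta U=U\tilde A$ with $\tilde A:=U^\dag\Theta U$, hence $z_{k+1}=\tilde A z_k$; taking $\tilde C:=U[1:m,:]$, the top block of $\by_k=Uz_k$ gives $y_k=\tilde C z_k$. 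This is a perfect model of the form \eqref{eqn:ss} of order $r\le n$, which I would finally embed into dimension $n$ by zero-padding $\tilde A,\tilde C$ and setting $x_k:=[z_k;0]$.

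The delicate step is recovering the observability-index bound, which I expect to be the main obstacle. I would argue block-by-block that the $i$-th $m$-row block of $U$ equals $\tilde C\tilde A^i$: reading $\by_k=Uz_k$ blockwise gives $y_{k+i}=U[im+1:(i+1)m,:]\,z_k$, whereas the state recursion gives $y_{k+i}=\tilde C\tilde A^i z_k$; since the $\{z_k\}$ span $\mathbb{R}^r$, these agree as matrices, so $\tilde\Gamma_s:=[\tilde C;\tilde C\tilde A;\cdots;\tilde C\tilde A^{s-1}]=U$ has full column rank $r$ and the observability index of $(\tilde A,\tilde C)$ is at most $s$ (zero-padding adds only unobservable modes, so the index is preserved). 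The crux here is the choice $V=\spn\{\by_k\}$ rather than the full column space of $\Theta$: only the data span forces the coordinates $\{z_k\}$ to span $\mathbb{R}^r$, which is exactly what upgrades the pointwise relations $y_{k+i}=\tilde C\tilde A^i z_k$ to the matrix identity needed for observability. The bi-infinite horizon is what guarantees both the $\Theta$-invariance of $V$ and that the reconstructed state trajectory extends to all $k\in\mathbb{I}$; I would flag the distinction between the data span and the image of $\Theta$ as the point where a naive argument would fail.
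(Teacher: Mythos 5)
Your (a)$\implies$(b) argument is the same as the paper's (full column rank of $\Gamma_s$, $x_k=\Gamma_s^\dag\by_k$, $\Theta:=\Gamma_s A\Gamma_s^\dag$), and your (b)$\implies$(a) follows the paper's overall route: restrict the dynamics to the data span $V=\spn\{\by_k\}$, realize an $r$-dimensional system $(\widetilde A,\widetilde C)$ with $r=\dim V\le\rank(\Theta)\le n$, show its extended observability matrix equals the basis matrix, then inflate to order $n$. Up to the inflation step your argument is correct, and your blockwise identification $U[im+1:(i+1)m,:]=\widetilde C\widetilde A^i$ (using that $\{z_k\}$ spans $\mathbb{R}^r$) is in fact a cleaner justification that $\widetilde\Gamma_s=U$ has full column rank than the paper's corresponding sentence.

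The gap is the final embedding. When $r<n$, zero-padding $\widetilde C$ to $C=[\widetilde C\;\;0_{m\times(n-r)}]$ with $A=\blkdiag(\widetilde A,0)$ gives $\Gamma_s=[\widetilde\Gamma_s\;\;0]$, which has rank $r<n$: the pair $(A,C)$ is \emph{unobservable}, so its observability index is not ``at most $s$'' in the sense the theorem uses --- indeed your own proof of (a)$\implies$(b), like the paper's, reads that hypothesis as ``$\Gamma_s$ has full column rank $n$.'' Your parenthetical ``zero-padding adds only unobservable modes, so the index is preserved'' is exactly backwards: adding unobservable modes is precisely what destroys full column rank of $\Gamma_s$, and the case $r<n$ is not degenerate (take any data whose span has dimension below the user-chosen $n$). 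The repair, which is what the paper does, is to pad $C$ with a block $C^\perp\in\mathbb{R}^{m\times(n-r)}$ of linearly independent columns orthogonal to the columns of $\widetilde C$, keeping $A=\blkdiag(\widetilde A,0)$ and $x_k=[z_k;0]$ (the state equations still hold because the padded coordinates are identically zero). Then $\Gamma_s[u;v]=0$ gives, from the first block row, $\widetilde Cu=-C^\perp v$; orthogonality of the two column spaces forces $\widetilde Cu=0$ and $C^\perp v=0$, hence $v=0$, and the remaining block rows give $\widetilde C\widetilde A^iu=0$ for $i\ge 1$, so $\widetilde\Gamma_s u=0$ and $u=0$, i.e., $\Gamma_s$ has full column rank $n$. (Note that even this fix tacitly assumes $\mathbb{R}^m$ contains $n-r$ independent directions orthogonal to the columns of $\widetilde C$; the paper leaves that implicit. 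But with pure zero-padding the construction fails whenever $r<n$, regardless of dimensions, so this step must be changed, not merely elaborated.)
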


\begin{proof}[Proof of (a)$\implies$(b)] From the assumption that $(A, C)$ is observable with index not greater than $s$, we have that $\Gamma_s$ has full column rank. Observe now from (a) that $\by_k = \Gamma_sx_k$ holds. By left multiplying $\Gamma_s^\dag$, we have $ x_{k} =\Gamma_s^\dag  \by_k$; moreover, $ \by_{k+1} =  \Gamma_{s} Ax_{k}=\Gamma_{s}A \Gamma_s^\dag \by_k$. Thus, we can see that, for $\Theta :=  \Gamma_{s}A \Gamma_s^\dag$, \eqref{eqn:exar} holds with $g_k=0$. Since $\rank(\Gamma_s)\leq n$, $\rank(\Theta)\leq n$; thus (b) holds.
\end{proof}

\begin{proof}[Proof of (b)$\implies$(a)] We redefine $\Theta\leftarrow G G^\top \Theta$ where the columns of $ G$ form an orthonormal basis of $\spn\{ \by_k\}$ (here, one can easily show that span of an infinite vector set is a vector space, so the basis is well-defined). We observe that \eqref{eqn:exar} with $g_k=0$ and $\rank(\Theta)\leq n$ still hold; thus (b) is not violated by the redefinition of $\Theta$. We let $\Theta =  P Q^\top$, where $ P\in\mathbb{R}^{ms\times r}$, $Q\in\mathbb{R}^{ms\times r}$, and $r:=\rank(\Theta)$ (such a factorization always exists). Also, we let $ \widetilde{x}_{k+1}:= Q^\top  \by_k$, $\widetilde{A}:= Q^\top  P$, and $\widetilde{C}:= P[1:m,:]$. One can verify from \eqref{eqn:exar} that $\widetilde{x}_{k+1}=\widetilde{A}\widetilde{x}_k$ and $y_k=\widetilde{C}\widetilde{x}_k$ hold for $k\in\mathbb{I}$.
  Now observe that $\rank(\Theta)\leq \dim\{\by_k\}$ holds due to the projection operator $ GG^\top$, and $\rank(\Theta)\geq \dim\{\by_k\}$ holds due to $\{\by_k\}\subseteq \text{range}(\Theta)$, where $\dim(\cdot)$ denotes the dimension of the $\spn(\cdot)$. As such, $r = \dim\{ \by_k\}$. One can see that $\by_{k}=\widetilde{\Gamma}_s \widetilde{x}_k$ holds for any $k\in\mathbb{I}$ and $\dim\{\widetilde{x}_k\}=r$, and this implies that $\widetilde{\Gamma}_s$ has full column rank. We construct $\{x_k\}$ and $(A,C)$ as:
\begin{align*}\setlength\arraycolsep{2pt}
  x_k=\begin{bmatrix}
  \widetilde{x}_k\\
  0_{n-r}
  \end{bmatrix},\,
  A=
  \begin{bmatrix}
    \widetilde{A}& 0_{r\times n-r}\\
    0_{n-r\times r} & 0_{n-r\times n-r},
  \end{bmatrix},
  C=
  \begin{bmatrix}
    \widetilde{C}& C^{\perp}
  \end{bmatrix},
\end{align*}
where the columns of $C^{\perp}\in\mathbb{R}^{m\times (n-r)}$ are orthogonal to $C$. One can observe that \eqref{eqn:ss} holds with $w_k=0$ and $v_k=0$. From the fact that $\widetilde{\Gamma}_s$ has full column rank and $C^\perp$ is orthogonal to $C$, we have that $\Gamma_s$ has full column rank and thus (a) holds.
\end{proof}
Note that (a)$\implies$(b) is well-known but, to the best of our knowledge, (a)$\impliedby$(b) has not been proved before. \\

\begin{theorem}\label{thm:equiv-2}
  Let $(\Gamma^*,X^*)$ be a solution of \eqref{eqn:sid} and $\Theta^*=PQ^\top$ with $P,Q\in\mathbb{R}^{ms\times n}$ be a solution of \eqref{eqn:lrank}, then:\\
  (a) $\Theta = \Gamma^* X^*_p \bY_p^\dag$ is a solution of \eqref{eqn:lrank}.\\
  (b) $(\Gamma,X) = (P,Q^\top \bY)$ is a solution of \eqref{eqn:sid}.
\end{theorem}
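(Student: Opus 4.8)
The plan is to show that problems \eqref{eqn:sid} and \eqref{eqn:lrank} share a common optimal value by constructing two maps between their feasible sets that preserve both feasibility and the objective, and then to recognize parts (a) and (b) as nothing more than these maps applied to the given optimal points. Write $v_{\mathrm{SID}}$ and $v_{\mathrm{DMD}}$ for the two optimal values; the goal reduces to $v_{\mathrm{SID}}=v_{\mathrm{DMD}}$ together with the observation that each constructed point is feasible and attains this common value.

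The key fact I would isolate first is a projection identity. The constraint $\row(X)\subseteq\row(\bY)$ is equivalent to $X=M\bY$ for some $M\in\mathbb{R}^{n\times ms}$; restricting to the first $\ell$ columns then gives $X_p=M\bY_p$, so that $\row(X_p)\subseteq\row(\bY_p)$. Using the Moore-Penrose identity $\bY_p\bY_p^\dag\bY_p=\bY_p$, this yields $X_p\bY_p^\dag\bY_p=M\bY_p\bY_p^\dag\bY_p=M\bY_p=X_p$. In words, right-multiplication by the projector $\bY_p^\dag\bY_p$ fixes any feasible $X_p$, which is exactly what makes the SID term $\Gamma X_p$ and the DMD term $\Theta\bY_p$ interchangeable.

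With this in hand I would prove the two inequalities. For $v_{\mathrm{DMD}}\le v_{\mathrm{SID}}$: given feasible $(\Gamma,X)$ of \eqref{eqn:sid}, set $\Theta:=\Gamma X_p\bY_p^\dag$; then $\rank(\Theta)\le n$ since $\Gamma$ has $n$ columns, and $\Theta\bY_p=\Gamma X_p\bY_p^\dag\bY_p=\Gamma X_p$ by the projection identity, so the DMD objective at $\Theta$ equals the SID objective at $(\Gamma,X)$. For $v_{\mathrm{SID}}\le v_{\mathrm{DMD}}$: given feasible $\Theta$ of \eqref{eqn:lrank}, factor $\Theta=PQ^\top$ with $P,Q\in\mathbb{R}^{ms\times n}$ (possible since $\rank(\Theta)\le n$) and set $(\Gamma,X):=(P,Q^\top\bY)$; then $\row(X)\subseteq\row(\bY)$, $X_p=Q^\top\bY_p$, and $\Gamma X_p=PQ^\top\bY_p=\Theta\bY_p$, matching the objectives. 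Hence $v_{\mathrm{SID}}=v_{\mathrm{DMD}}$. Applying the first map to the optimal $(\Gamma^*,X^*)$ produces precisely $\Theta=\Gamma^*X_p^*\bY_p^\dag$, which is DMD-feasible and attains $v_{\mathrm{DMD}}$, giving (a); applying the second map to $\Theta^*=PQ^\top$ produces precisely $(\Gamma,X)=(P,Q^\top\bY)$, which is SID-feasible and attains $v_{\mathrm{SID}}$, giving (b).

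The individual steps are short, so the crux I would flag is the projection identity, and in particular the inference $\row(X)\subseteq\row(\bY)\Rightarrow X_p\bY_p^\dag\bY_p=X_p$. This is where the overlapping Hankel construction is essential: one must confirm that column-slicing commutes with the factorization, i.e. that the \emph{same} $M$ with $X=M\bY$ also certifies $X_p=M\bY_p$, so that $\row(X_p)\subseteq\row(\bY_p)$ rather than merely $\row(X_p)\subseteq\row(\bY)$. I would also verify that the rank bound transfers cleanly in both directions (the inner dimension $n$ of $\Gamma X_p$ on one side, the $\rank\le n$ factorization $PQ^\top$ on the other), confirming that passing between $\Gamma X_p$ and $\Theta\bY_p$ never changes the attained objective value.
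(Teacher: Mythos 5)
Your proposal is correct and rests on exactly the same ingredients as the paper's proof: the projection identity $X_p\bY_p^\dag\bY_p=X_p$ for any feasible $X$ (which the paper invokes in part (a)), and the two objective-preserving maps $(\Gamma,X)\mapsto \Gamma X_p\bY_p^\dag$ and $PQ^\top\mapsto(P,Q^\top\bY)$ (which the paper uses inside contradiction arguments for (a) and (b), via the factorization $X'=T\bY$). The only difference is organizational rather than mathematical: you argue directly, establishing $v_{\mathrm{SID}}=v_{\mathrm{DMD}}$ first and reading off both parts, which incidentally yields the equality of optimal values that the paper states only as a remark after the theorem.
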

\begin{proof}[Proof of (a)]
  Suppose that there exists $\Theta'=P'(Q')^\top$ with $P'\in\mathbb{R}^{ms\times n}$ and $Q'\in\mathbb{R}^{ms\times n}$, such that has better (lower) objective value for \eqref{eqn:lrank} than $\Gamma^*X^*_p\bY_p^\dag$. Observe:
  $\row((Q')^\top \bY)\subseteq\row(\bY)$.
  Furthermore, we have that $\row(X^*_p)\subseteq\row(\bY_p)$ because $\row(X^*)\subseteq\row(\bY)$. This implies that:
  $X^*_p \bY_p^\dag \bY_p=X^*_p$.
  Therefore, from the assumption that $\Theta'$ has better objective value for \eqref{eqn:lrank} than $\Gamma^* X_p^* \bY_p^\dag$, we have
  $\Vert \bY_f - P'(Q')^\top \bY_p\Vert_F^2<\Vert \bY_f - \Gamma^* X^*_p\Vert_F^2$.
  Thus, now we know that $(P',(Q')^\top \bY) $ is feasible to \eqref{eqn:sid}, and it has better objective value for \eqref{eqn:sid} than $(\Gamma^*,X^*)$. This contradicts the assumption that $(\Gamma^*,X^*)$ is a solution of \eqref{eqn:sid}; therefore, (a) holds.
\end{proof}
\begin{proof}[Proof of (b)]
  Suppose there exists $(\Gamma',X')$ having a better objective value for \eqref{eqn:sid} than $(P,Q^\top \bY_p)$ and $\row(X')\subseteq \row(\bY)$. We observe from $\row(X')\subseteq \row(\bY)$ that there exists $T\in\mathbb{R}^{n\times ms}$ such that $X'=T\bY$ holds; thus, $\Gamma'X'_p= \Gamma' T \bY_p$. One can see from the dimension of $\Gamma'$ that $\rank(\Gamma' T)\leq n$ and from the assumption, we can see that $\Vert \bY_f - \Gamma'T\bY_p\Vert_F^2<\Vert \bY_f - PQ^\top \bY_p\Vert_F^2$ holds. This implies that that $\Gamma' T$ has a better objective value for \eqref{eqn:lrank} than $\Theta^*$ does. This contradicts the assumption that $\Theta^*$ is a solution of \eqref{eqn:lrank}; therefore, (b) holds.
\end{proof}
Moreover, it directly follows from Theorem \ref{thm:equiv-2}(b) that the optimal objective value of \eqref{eqn:sid} and \eqref{eqn:lrank} are equal. Thus, the optimization problems \eqref{eqn:sid} and \eqref{eqn:lrank} are equivalent. 

\section{SID-DMD Algorithm}\label{sec:alg}
We exploit the equivalence results to derive a combined algorithm that we call SID-DMD. This algorithm aims to identify an order $n$ state-space model with observability index not greater than $s$ from a given sequence of output data $\{y_k\}_{k=i}^{j}$. The proposed algorithm takes the desired orders $n,s\in\mathbb{I}$ and output data sequence $\{y_k\}_{k=i}^{j}$ as inputs, and returns system matrices $(A,C)$. If needed, it also returns modes $(\Psi,\Lambda)$. The identification procedure is a combination of SID and DMD; in specific, the overall procedure involves three steps: (i) identification of a rank-$n$ extended AR mapping $\Theta$, (ii) extraction of the system matrices $(A,C)$ from AR mapping $\Theta$, (iii, optional) spatiotemporal mode decomposition to obtain the spatial and temporal modes $(\Psi,\Lambda)$. Notably, the SID-DMD algorithm provides an optimal model (solves \eqref{eqn:sid} and \eqref{eqn:lrank}) and can be computed using SVD (it is easy to implement).

\subsection{Rank-Constrained Matrix Regression}
The first step of the algorithm is the identification of a low-rank AR model \eqref{eqn:exar} by solving Problem \eqref{eqn:lrank}. The following proposition establishes a closed-form solution for this problem and highlights several properties.\\

\begin{proposition}\label{prop:lrank}
  The following holds:  \\
  (a) $\Theta^*=Z_{(n)} S_{2}^{-1} U_2^\top$ is a solution of \eqref{eqn:lrank}, where $Z=  \bY_f V_2$; $Z_{(n)}= U_1S_1V_1^\top$ is an $n$-truncated SVD of $Z$; $\bY_p=U_2S_2V_2^\top$ is an economic SVD.\footnote{Note that truncated SVDs are not necessarily unique due to the fact that the $n$th and $(n+1)$th largest singular values may be the same. As such, we consider $Z_{(n)}$ as a specific realization; so the mapping $\Theta^*$ is not necessarily unique. However, if the data contains noise, it is highly unlikely that the $n$th and $(n+1)$th largest singular values are equal.} \\
  (b) $\Theta$ is a solution of \eqref{eqn:lrank} if and only if $\Theta U_2(U_2)^\top = \Theta^* $ holds for some $\Theta^*$ and $\rank(\Theta)\leq n$.\\
  (c) Suppose that the $n$-truncated SVD of $Z$ is unique; then $\Theta^{*}$ is a unique solution of \eqref{eqn:lrank} if and only if $\bY_p$ has full row rank.\\
  (d) $\Vert \Theta^*\Vert_{F} \leq \Vert \Theta\Vert_{F}$ holds for any solution $\Theta$ of \eqref{eqn:lrank} if the $n$-truncated SVD of $Z$ is unique.\\
  (e) $\left\Vert \Theta^{\textrm{\normalfont full}} \bY_p - \Theta^* \bY_p \right\Vert_F =  (\sum_{k\geq n+1} (\sigma_k)^2 )^{1/2}$,  where $\Theta^{\text{\normalfont full}}:= Z S_2^{-1}U_2^\top$ is a full-rank solution of \eqref{eqn:lrank}; $\sigma_k$ is the $k$th largest singular value of $Z$.
\end{proposition}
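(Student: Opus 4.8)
The plan is to reduce the rank-constrained regression \eqref{eqn:lrank} to a pure low-rank matrix approximation problem to which the Eckart--Young--Mirsky theorem applies, and then to read off parts (a)--(e) from the structure of that reduction. First I would complete the columns of $V_2$ to an orthogonal matrix $[V_2\;V_2^\perp]$ and use invariance of the Frobenius norm under orthogonal transformations to split the objective as
\begin{align*}
\left\Vert \bY_f - \Theta \bY_p\right\Vert_F^2 = \left\Vert Z - \Theta U_2 S_2\right\Vert_F^2 + \left\Vert \bY_f V_2^\perp\right\Vert_F^2,
\end{align*}
where the second term is independent of $\Theta$ because $\bY_p V_2^\perp = 0$. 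This isolates the only $\Theta$-dependent quantity, namely $\Theta U_2 S_2$.

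Next I would introduce $W := \Theta U_2 S_2$ and note the key equivalence: since $S_2$ is invertible and $U_2$ has orthonormal columns, any $W$ with $\rank(W)\leq n$ is realized by $\Theta := W S_2^{-1} U_2^\top$ (which satisfies $\rank(\Theta)\leq n$ and $\Theta U_2 S_2 = W$), while conversely $\rank(W)\leq \rank(\Theta)\leq n$. Hence minimizing \eqref{eqn:lrank} is equivalent to minimizing $\Vert Z - W\Vert_F$ over $\rank(W)\leq n$, whose minimizer is $W^* = Z_{(n)}$ by Eckart--Young--Mirsky. Setting $\Theta^* = Z_{(n)} S_2^{-1} U_2^\top$ proves (a). For (b) I would observe that optimality forces $\Theta U_2 S_2 = Z_{(n)}$, equivalently $\Theta U_2 U_2^\top = Z_{(n)} S_2^{-1} U_2^\top = \Theta^*$; conversely this identity together with $\rank(\Theta)\leq n$ recovers the optimal $W$, giving the characterization.

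Parts (d) and (e) then follow by short computations. For (d) I would decompose any solution via (b) as $\Theta = \Theta^* + \Theta(I - U_2 U_2^\top)$ and show the Frobenius cross term vanishes because $(I - U_2 U_2^\top)(\Theta^*)^\top = 0$ (the rows of $\Theta^*$ lie in $\spn(U_2)$), so $\Vert\Theta\Vert_F^2 = \Vert\Theta^*\Vert_F^2 + \Vert\Theta(I-U_2U_2^\top)\Vert_F^2 \geq \Vert\Theta^*\Vert_F^2$. For (e) I would compute directly $\Theta^{\text{full}}\bY_p - \Theta^*\bY_p = (Z - Z_{(n)})V_2^\top$, and since $V_2^\top$ has orthonormal rows the Frobenius norm equals $\Vert Z - Z_{(n)}\Vert_F = (\sum_{k\geq n+1}\sigma_k^2)^{1/2}$, again by the error formula of Eckart--Young--Mirsky.

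The hard part will be the reverse implication of (c). When $\bY_p$ lacks full row rank, $U_2 U_2^\top \neq I$ and I must exhibit a genuinely different optimal $\Theta$. The plan is to pick a unit vector $\xi$ in the orthogonal complement of $\spn(U_2)$ and write $\Theta^* = U_1 M$ with $M := S_1 V_1^\top S_2^{-1} U_2^\top$, so that for any nonzero $c$ the perturbation $\Theta^* + U_1 c\,\xi^\top = U_1(M + c\,\xi^\top)$ still has rank at most $n$, vanishes on $U_2$ (hence leaves the objective unchanged by (b)), yet differs from $\Theta^*$. The forward implication is easier: full row rank gives $U_2 U_2^\top = I$, so (b) collapses to $\Theta = \Theta^*$, and uniqueness of the truncated SVD makes $\Theta^*$ itself unique. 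The only delicate points I would watch are the non-uniqueness of truncated SVDs when singular values coincide (handled by treating $Z_{(n)}$ as a fixed realization, as the footnote does) and the degenerate case $\rank(Z)<n$, where the rank constraint is slack and the perturbation argument simplifies.
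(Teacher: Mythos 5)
Your proposal is correct and takes essentially the same approach as the paper: the same orthogonal splitting of the objective via $[V_2\;V_2^\perp]$, the same reduction to a rank-$n$ approximation of $Z=\bY_f V_2$ settled by Eckart--Young--Mirsky, the same characterization $\Theta U_2 U_2^\top=\Theta^*$ for part (b), and a rank-one perturbation argument for the reverse direction of (c). Your departures are only refinements within that strategy: you spell out the Pythagorean computation behind (d) (which the paper dismisses as ``directly follows from (b)''), and your perturbation $U_1 c\,\xi^\top$ with $\xi$ chosen directly in the orthogonal complement of $\spn(U_2)$ guarantees $\xi^\top U_2=0$ by construction, whereas the paper perturbs by $pq^\top$ with $q$ drawn from $Q^\perp$ and the needed identity $q^\top U_2=0$ is asserted rather than forced by the choice.
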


\begin{proof}[Proof of (a)] By orthogonal invariance of the Frobenius norm, the squared objective of \eqref{eqn:lrank} can be rewritten as:
  \begin{align*}
    \Vert (\bY_f  -\Theta\bY_p)[V_2\;V_2^\perp]\Vert_F^2 
    =\Vert Z -\Theta  U_2 S_2 \Vert^2_F + \Vert\bY_f  V_2^\perp\Vert_F^2,
  \end{align*}
  where the columns of $V_2^\perp$ are the orthonormal basis of the orthogonal complement of the column space of $V_2$. The equality follows from the block structure. Observe that the second term is constant and thus it can be neglected. We observe that a lower bound of the objective under a rank constraint on $\Theta$ \eqref{eqn:lrank-con} can be found by a rank-$n$ approximation of $Z$. This is because $\rank(\Theta U_2 S_2)\leq n$ holds if $\rank(\Theta)\leq n$. This implies that $\Theta$ is a solution of \eqref{eqn:lrank} if
  \begin{align}\label{eqn:con-1}
    \Theta U_2 S_2 = Z_{(n)}
  \end{align} and $\rank(\Theta)\leq n$ holds for some $Z_{(n)}$ (this provides sufficient conditions for optimality). Now we check that $\Theta^*$ satisfies such conditions; one can see that $ \Theta^* U_2 S_2=Z_{(n)}$ holds. Lastly, one can see that $\rank(\Theta^*)\leq n$, since $\rank(Z_{(n)})\leq n$. Therefore, $\Theta^*$ satisfies the sufficient condition for optimality and is thus a solution.
\end{proof}

\begin{proof}[Proof of (b)] We first prove $\implies$; consider a solution $\Theta$ of \eqref{eqn:lrank}. Condition $\rank(\Theta)\leq n$ is satisfied from the feasibility. Since $\Theta^*$ is a solution, $\Theta$ should also satisfy $\Theta U_2 S_2 = Z_{(n)}$ for some $Z_{(n)}$ not to be worse than $\Theta^*$ in objective value. Finally, $\Theta U_2 S_2 = Z_{(n)}$ implies $\Theta U_2(U_2)^\top  = \Theta^* $ for some $\Theta^*$. We now prove $\impliedby$; above we saw that $\Theta U_2 S_2 = Z_{(n)}$ for some $Z_{(n)}$ and $\rank(\Theta)\leq n$ are sufficient conditions for optimality. Accordingly, it suffices to show that $\Theta U_2 S_2 = Z_{(n)}$ holds for some $Z_{(n)}$ if $\Theta U_2(U_2)^\top = \Theta^* $ for some $\Theta^*$. By right multiplying $U_2 S_2$ to $\Theta U_2(U_2)^\top = \Theta^* $, we have $\Theta U_2S_2= Z_{(n)}$. As such, \eqref{eqn:con-1} holds. 
\end{proof}

\begin{proof}[Proof of (c)] We first prove $\implies$ by contradiction; suppose that $\bY_p$ does not have full row rank. Let $\Theta^*=PQ^\top$, where $P,Q\in\mathbb{R}^{ms\times n}$ and $P$ has full column rank. We consider $\Theta' := \Theta^{*} + pq^\top$, where $p$ is a non-zero column of $P$ and $q$ is the first column of $Q^\perp$, where the columns of $Q^\perp$ are the orthonormal basis of the orthogonal complement of the column space of $Q$. Note that the column space of $Q^\perp$ is non-empty due to the assumption that $\bY_p$ does not have full row rank. Observe that $pq^\top\neq 0$. We have that $\Theta U_2(U_2)^\top = \Theta^* $ is satisfied (because $q^\top U_2=0$). Also, one can see that the columns of $\Theta^*$ span the column space of $\Theta'$; we thus have that $\rank(\Theta')\leq n$ is satisfied. These imply that $\Theta'\neq \Theta$ is a solution of \eqref{eqn:lrank}; thus, the solution of \eqref{eqn:lrank} is not unique. Therefore, the solution is unique only if $\bY_p$ has full row rank. We now prove $\impliedby$; we observe that, if $\bY_p$ has full row rank, $U_2(U_2)^\top=I$. Thus, the solution satisfies $\Theta=\Theta^*$. The uniqueness of $n$-truncated SVD of $Z$ eliminates the non-uniqueness of $\Theta^*$. Therefore, the solution is unique.
\end{proof}

\begin{proof}[Proof of (d)]
  Directly follows from (b).
\end{proof}

\begin{proof}[Proof of (e)] We have $(\Theta^{\text{full}} - \Theta^*)\bY_p = ( Z- Z_{(n)})( V_2)^\top$. By the orthogonal invariance of the Frobenius norm, $\Vert \Theta^{\text{full}} \bY_p - \Theta^* \bY_p \Vert_F= \Vert  Z- Z_{(n)}\Vert_F$. From Eckart-Young-Mirsky, $\Vert  Z- Z_{(n)}\Vert_F= (\sum_{k\geq n+1} (\sigma_k)^2 )^{1/2}$. This result holds regardless of the uniquenesss of $Z_{(n)}$.
\end{proof}

One can thus see that a solution of \eqref{eqn:lrank} can be computed by using SVDs (pseudoinverses can also be computed by SVDs). The solution of \eqref{eqn:lrank} is globally optimal and scalable; observe also from the solution form of \eqref{eqn:lrank} that one can naturally obtain the solution as a factorized form $\Theta=PQ^\top$, with $P,Q\in\mathbb{R}^{ms\times n}$ (e.g., $P=U_1$; $Q=U_2 S_2^{-1} V_1 S_1$). Retaining such a form is advantageous in that it requires less memory (for the case that $ms\gg n$). Note that Proposition \ref{prop:lrank} is stated in general terms; in other words, the results hold with arbitrary $\bY_p$ and $\bY_f$. Proposition \ref{prop:lrank} extends the results reported in the literature \cite{heas2017optimal,xiang2012optimal}. In partcular, (b) and (d) have not been reported in the literature (to the best of our knowledge).

One notable observation is that even with rank constraint, the sample size should be sufficiently large to guarantee the uniqueness of the solution (Proposition \ref{prop:lrank}(c)). If the sample size is insufficient (which is the case for most applications in DMD), $\Theta^*$ is a reasonable choice since it is the minimum norm solution.\\

\vspace{-0.2in}
\subsection{Estimation of System Matrices}
The second step of the SID-DMD algorithm is the estimation of the system matrices $(A,C)$ from the extended AR mapping $\Theta^*$. Here, we present a method that directly obtains system matrices from the factorized form of the solution $\Theta^*=PQ^\top$ of Problem \eqref{eqn:lrank}. The following proposition establishes this approach.\\

\begin{proposition}\label{prop:solb}
  Let $\Theta^*=PQ^\top$ with $P,Q\in\mathbb{R}^{ms\times n}$ be a solution of \eqref{eqn:lrank}. Then there exists a solution $(\Gamma^*,X^*)$ of \eqref{eqn:sid} such that $ A:= Q^\top  P$ and $ C:= P[1:m,:] $ solve:
  \begin{align}\label{eqn:ACprob}
    \min_{A,C}\;\left\Vert
    [X^*_f;Y_f]-[A;C]
      X^*_p\right\Vert_{F} ,
  \end{align}
  where $X^*_f:=X^*[:,2:\ell+1]$; $Y_f:=[y_{i+1}\;\cdots\;y_{j-s+1}]$.
\end{proposition}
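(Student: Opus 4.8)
The plan is to invoke Theorem \ref{thm:equiv-2}(b): for the given factorization $\Theta^*=PQ^\top$, the pair $(\Gamma^*,X^*)=(P,Q^\top\bY)$ is a solution of \eqref{eqn:sid}, and this is the witness I would use. With this choice $X^*_p=Q^\top\bY_p$ and $X^*_f=Q^\top\bY_f$, so \eqref{eqn:ACprob} becomes an ordinary (unconstrained, convex, quadratic) least-squares regression of the stacked target $[X^*_f;Y_f]$ onto the regressor $X^*_p$. It therefore suffices to check that the proposed $(A,C)=(Q^\top P,\,P[1:m,:])$ satisfies the normal equations $\big([X^*_f;Y_f]-[A;C]X^*_p\big)(X^*_p)^\top=0$.

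First I would make the residual explicit. Let $R:=\bY_f-\Theta^*\bY_p$. Using the two elementary identities $Y_f=\bY_f[1:m,:]$ (the first block row of $\bY_f$ is exactly $[y_{i+1}\ \cdots\ y_{j-s+1}]$) and $P[1:m,:]\,Q^\top=\Theta^*[1:m,:]$, a direct substitution gives $X^*_f-AX^*_p=Q^\top R$ and $Y_f-CX^*_p=R[1:m,:]$, so the stacked residual equals $[\,Q^\top R;\,R[1:m,:]\,]$. Right-multiplying by $(X^*_p)^\top=\bY_p^\top Q$ and using that row selection commutes with right multiplication, both blocks vanish as soon as the single identity $R\,\bY_p^\top Q=0$ holds. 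The entire proposition thus reduces to establishing this identity.

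The core step is to prove $R\,\bY_p^\top Q=0$ from the SVD data of Proposition \ref{prop:lrank}. From $\bY_p=U_2S_2V_2^\top$ (so $\bY_p^\top=V_2S_2U_2^\top$) and the optimality relation $\Theta^*U_2S_2=Z_{(n)}$ (established in the proof of Proposition \ref{prop:lrank}(a)), I would compute $R\,\bY_p^\top=RV_2S_2U_2^\top=(Z-Z_{(n)})S_2U_2^\top$, using $RV_2=Z-Z_{(n)}$, where $Z=\bY_f V_2$. If $\rank(\Theta^*)=n$, then $P,Q$ have full column rank, so $\text{range}(Q)=\text{range}((\Theta^*)^\top)=\text{range}(U_2S_2^{-1}V_1)$; writing $Q=U_2S_2^{-1}V_1M$ for an invertible $M$ yields $U_2^\top Q=S_2^{-1}V_1M$ and hence $R\,\bY_p^\top Q=(Z-Z_{(n)})V_1M$. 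Because $V_1$ collects the leading $n$ right singular vectors of $Z$, the truncation tail annihilates it, $(Z-Z_{(n)})V_1=0$, so $R\,\bY_p^\top Q=0$. If instead $\rank(\Theta^*)<n$, then $Z_{(n)}=Z$ and the identity holds trivially.

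I expect the main obstacle to be precisely this last step for a general (non-canonical) factorization $\Theta^*=PQ^\top$: one must argue that $\text{range}(Q)$ coincides with the row space of $\Theta^*$, so that $U_2^\top Q$ lands in the span of $V_1$ up to an invertible change of basis, and one must separately dispatch the rank-deficient case in which the truncation tail $Z-Z_{(n)}$ vanishes. The decoupling of \eqref{eqn:ACprob} and the residual identities are routine by comparison; once $R\,\bY_p^\top Q=0$ is in hand, convexity of the least-squares objective closes the argument immediately.
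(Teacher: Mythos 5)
Your overall skeleton is sound and close to the paper's: both use Theorem \ref{thm:equiv-2}(b) to produce the witness $(\Gamma^*,X^*)=(P,Q^\top\bY)$, and your reduction of \eqref{eqn:ACprob} to the single identity $R\,\bY_p^\top Q=0$ (with $R:=\bY_f-\Theta^*\bY_p$) is correct and plays the same role as the paper's left-multiplication by $[Q^\top;I_{ms\times ms}[1:m,:]]$. The gap is in how you establish that identity. Your claim $\text{range}((\Theta^*)^\top)=\text{range}(U_2S_2^{-1}V_1)$, and hence the factorization $Q=U_2S_2^{-1}V_1M$ with $M$ invertible, is false for a general solution of \eqref{eqn:lrank}. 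By Proposition \ref{prop:lrank}(b)--(c), when $\bY_p$ does not have full row rank the solution set contains matrices of the form $\Theta'=\Theta^*_{\mathrm{can}}+pq^\top$ with $q\neq 0$ and $q^\top U_2=0$ (this is exactly the construction used in the proof of Proposition \ref{prop:lrank}(c)); such a $\Theta'$ can still have rank exactly $n$ (take $p$ a column of $U_1$), yet its row space contains directions orthogonal to $\text{range}(U_2)$, so no factorization $\Theta'=P'(Q')^\top$ admits $Q'=U_2S_2^{-1}V_1M$. Since the proposition quantifies over \emph{all} factored solutions $\Theta^*=PQ^\top$, your argument as written only covers the canonical solution of Proposition \ref{prop:lrank}(a). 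A second, smaller slip: the relation $\Theta^*U_2S_2=Z_{(n)}$ for an \emph{arbitrary} solution comes from Proposition \ref{prop:lrank}(b), not from the proof of part (a), which only verifies it for the canonical solution.

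The gap is repairable in two ways. Staying on your route: from $\Theta^*U_2S_2=Z_{(n)}=U_1S_1V_1^\top$ and full column rank of $P$ (your rank-$n$ case), left-multiply by $P^\dag$ to get $Q^\top U_2S_2=(P^\dag U_1)S_1V_1^\top$, i.e., $S_2U_2^\top Q=V_1S_1(P^\dag U_1)^\top$; this shows directly that $\text{range}(S_2U_2^\top Q)\subseteq\text{range}(V_1)$, which is all you need since $(Z-Z_{(n)})V_1=0$, and it makes no claim about $\text{range}(Q)$ itself (the point is that $U_2^\top$ annihilates whatever part of $Q$ lies outside $\text{range}(U_2)$). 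Alternatively, the paper's own argument reaches $R\,\bY_p^\top Q=0$ in two lines with no SVD at all: if some $P'$ achieved $\Vert\bY_f-P'X_p^*\Vert_F<\Vert\bY_f-PX_p^*\Vert_F$, then $P'Q^\top$ would be feasible for \eqref{eqn:lrank} (its rank is at most $n$) and strictly better than $\Theta^*$, a contradiction; hence $P$ solves the unconstrained problem $\min_{P'}\Vert\bY_f-P'X_p^*\Vert_F$, whose normal equations are precisely $(\bY_f-PX_p^*)(X_p^*)^\top=R\,\bY_p^\top Q=0$. This exchange argument also removes the need for your case split on $\rank(\Theta^*)$.
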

\begin{proof}
\vspace{-0.2in}
  From Theorem \ref{thm:equiv-2}, $(P,Q^\top \bY)$ is a solution of \eqref{eqn:sid}. We choose $\Gamma^*=P$ and $X^*=Q^\top \bY$. From the optimality of $\Theta^*$ to \eqref{eqn:lrank}, we have that $P$ is a solution of $\min_{P'}\Vert \bY_f - P' X^*_p \Vert_F^2$, since $X^*_p=Q^\top \bY_p$. This implies that $\bY_f (\bX_p^*)^\dag= P (X^*_p)^\dag X^*_p$ (this follows from the full-rank case of Proposition \ref{prop:lrank} (b)). Now we left multiply $[Q^\top;I_{ms\times ms}[1:m,:]]$. This yields $ [X^*_f; Y_f](X^*_p)^\dag = [A;  C] X^*_p (X^*_p)^\dag$. Again, by Proposition \ref{prop:lrank}(b), one can see that $(A,C)$ is a solution of Problem \eqref{eqn:ACprob}.
\end{proof}

In other words, by extracting $(A,C)$ as $ A:= Q^\top  P$ and $ C:= P[1:m,:]$ from $\Theta^*=PQ^\top$, one can obtain $(A,C)$ that minimizes the prediction error of the state-space model (up to within a similarity transformation). 

\subsection{Mode Decomposition}
In high-dimensional systems, it is often useful to express the state evolution equation as an evolution of spatial and temporal modes. Here, we derive a similar representation with output predictions for \eqref{eqn:ss}.\\

\begin{proposition}
  For diagonalizable $A$, we have that $\widehat{y}_{k|\ell} = \Psi \Lambda^{k-\ell}\widehat{b}_{\ell|\ell}$, where $\widehat{x}_{k+1|\ell} := A \widehat{x}_{k|\ell}$; $\widehat{y}_{k|\ell} := C \widehat{x}_{k|\ell}$; $\widehat{b}_{k|\ell}:=\Phi^{-1}\widehat{x}_{k|\ell}$; an eigendecomposition of $A$ is $A\Phi = \Phi\Lambda $; $\Psi:=C\Phi$.
\end{proposition}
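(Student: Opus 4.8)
The plan is to prove the identity by directly unrolling the state-prediction recursion and then substituting the eigendecomposition of $A$. Since $A$ is diagonalizable, the eigenvector matrix $\Phi$ is invertible, so $\Phi^{-1}$ exists and the modal coordinates $\widehat{b}_{k|\ell}=\Phi^{-1}\widehat{x}_{k|\ell}$ are well-defined; this is the only place where the diagonalizability hypothesis is used. I expect no real obstacle here, as the statement reduces to a routine computation once the recursion is closed in terms of a matrix power.

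First I would iterate the recursion $\widehat{x}_{k+1|\ell}=A\widehat{x}_{k|\ell}$ from the initial condition $\widehat{x}_{\ell|\ell}$ to obtain $\widehat{x}_{k|\ell}=A^{k-\ell}\widehat{x}_{\ell|\ell}$ for all $k\geq\ell$, which is a one-line induction on the index $k-\ell$. Next I would rewrite the matrix power using the eigendecomposition: from $A\Phi=\Phi\Lambda$ and the invertibility of $\Phi$ we obtain $A=\Phi\Lambda\Phi^{-1}$, and multiplying $k-\ell$ copies of this factorization together lets the interior factors $\Phi^{-1}\Phi$ telescope, giving $A^{k-\ell}=\Phi\Lambda^{k-\ell}\Phi^{-1}$.

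Finally I would assemble the pieces. Substituting the power identity yields $\widehat{x}_{k|\ell}=\Phi\Lambda^{k-\ell}\Phi^{-1}\widehat{x}_{\ell|\ell}=\Phi\Lambda^{k-\ell}\widehat{b}_{\ell|\ell}$, where the last step uses the definition $\widehat{b}_{\ell|\ell}=\Phi^{-1}\widehat{x}_{\ell|\ell}$. Applying the output map together with the definition $\Psi:=C\Phi$ then gives $\widehat{y}_{k|\ell}=C\widehat{x}_{k|\ell}=C\Phi\Lambda^{k-\ell}\widehat{b}_{\ell|\ell}=\Psi\Lambda^{k-\ell}\widehat{b}_{\ell|\ell}$, which is exactly the claimed representation. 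The main (and essentially only) point requiring care is tracking the interval of validity $k\geq\ell$, so that the expression is interpreted as the forward simulation of the model initialized at time $\ell$; the formula then exhibits each output prediction as a superposition of the spatial modes (columns of $\Psi$) evolving in time according to the eigenvalues in $\Lambda$.
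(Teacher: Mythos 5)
Your proof is correct and is precisely the computation the paper summarizes with ``Proof directly follows from the definitions'': unroll the recursion to get $\widehat{x}_{k|\ell}=A^{k-\ell}\widehat{x}_{\ell|\ell}$, substitute $A^{k-\ell}=\Phi\Lambda^{k-\ell}\Phi^{-1}$, and apply $C$. There is no divergence in approach; you have simply written out the routine steps the paper leaves implicit.
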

Proof directly follows from the definitions. We now redefine the modes in the context of output prediction of state-space model.\\

We define the columns $\{\psi_k\}_{k=1}^n$ of $\Psi$ as spatial modes and the diagonal components $\{\lambda_k\}_{k=1}^n$ of $\Lambda$ as temporal modes. The spatiotemporal modes can have complex components, which leads to oscillatory behavior. One can obtain the estimate $\widehat{x}_{\ell|\ell}$ of the current state $x_\ell$ by using observers, such as a Kalman Filter.

\subsection{SID-DMD Algorithm}
The results of this section are summarized in Algorithm \ref{alg:main}. The algorithm is divided into three subroutines: rank-constrained matrix regression, system matrices estimation, and mode decomposition. Here, {\tt svd}$(\cdot,${\tt 'econ'}$)$ is a function that computes economic SVD; {\tt svds}$(\cdot,n)$ is a function that computes $n$-truncated SVD; {\tt eig}$(\cdot)$ is a function that computes eigendecomposition. An implementation of this is available at \url{https://github.com/zavalab/JuliaBox/tree/master/SID_DMD}. 

\begin{algorithm}
  \caption{SID-DMD Algorithm}\label{alg:main}
  (a) Rank-Constrained Matrix Regression:
  
  \begin{algorithmic}[1]
    \REQUIRE $n,s\in\mathbb{I}_{>0}$, $\{ y_k\in\mathbb{R}^{m}\}_{k=i}^{j}$
    \STATE $\bY_{p} :=  [\by_{i}\;\cdots\; \by_{j-s}]$; $\bY_{f}:= [\by_{i+1}\;\cdots\; \by_{j-s+1}]$
    \STATE $(U_2,S_2,V_2)=${\tt svd}$( \bY_p,\text{\tt 'econ'})$
    \STATE $(U_1,S_1,V_1)=${\tt svds}$( \bY_f V_2,n)$
    \STATE $P=U_1$; $Q=U_2 S_2^{-1} V_1 S_1$
    \ENSURE $P\in\mathbb{R}^{ms\times n}$, $Q\in\mathbb{R}^{ms\times n}$
  \end{algorithmic}

  (b) System Matrices Estimation:
  
  \begin{algorithmic}[1]
    \REQUIRE $P\in\mathbb{R}^{ms \times n}$, $Q\in\mathbb{R}^{ms\times n}$
    \STATE $ A= Q^\top  P$; $ C= P[1:m,:]$
    \ENSURE ${ A}\in\mathbb{R}^{n\times n}$, ${ C}\in\mathbb{R}^{m\times n}$
  \end{algorithmic}

  (c) Mode Decomposition:
  
  \begin{algorithmic}[1]
    \REQUIRE $A\in\mathbb{R}^{n\times n}$, $C\in\mathbb{R}^{m\times n}$
    \STATE $(\Lambda,\Phi)= \tt{eig}(A)$
    \STATE $\Psi = C\Phi$
    \ENSURE ${ \Psi}\in\mathbb{R}^{m\times n}$, ${\Lambda}\in\mathbb{R}^{n\times n}$
  \end{algorithmic}
\end{algorithm}

\begin{figure}[!htp]
  \includegraphics[width=0.48\textwidth]{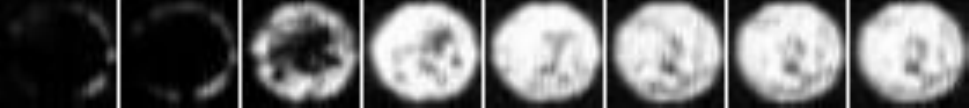}\\
  \includegraphics[width=0.48\textwidth]{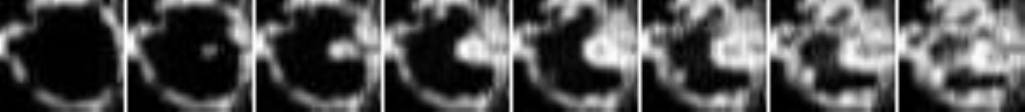}
  \caption{Snapshots of experimental data for DMMP (first row) and water (second row) at different times (left to right).}\label{fig:lc-raw}
  \vspace{0.1in}
  \includegraphics[width=0.49\textwidth]{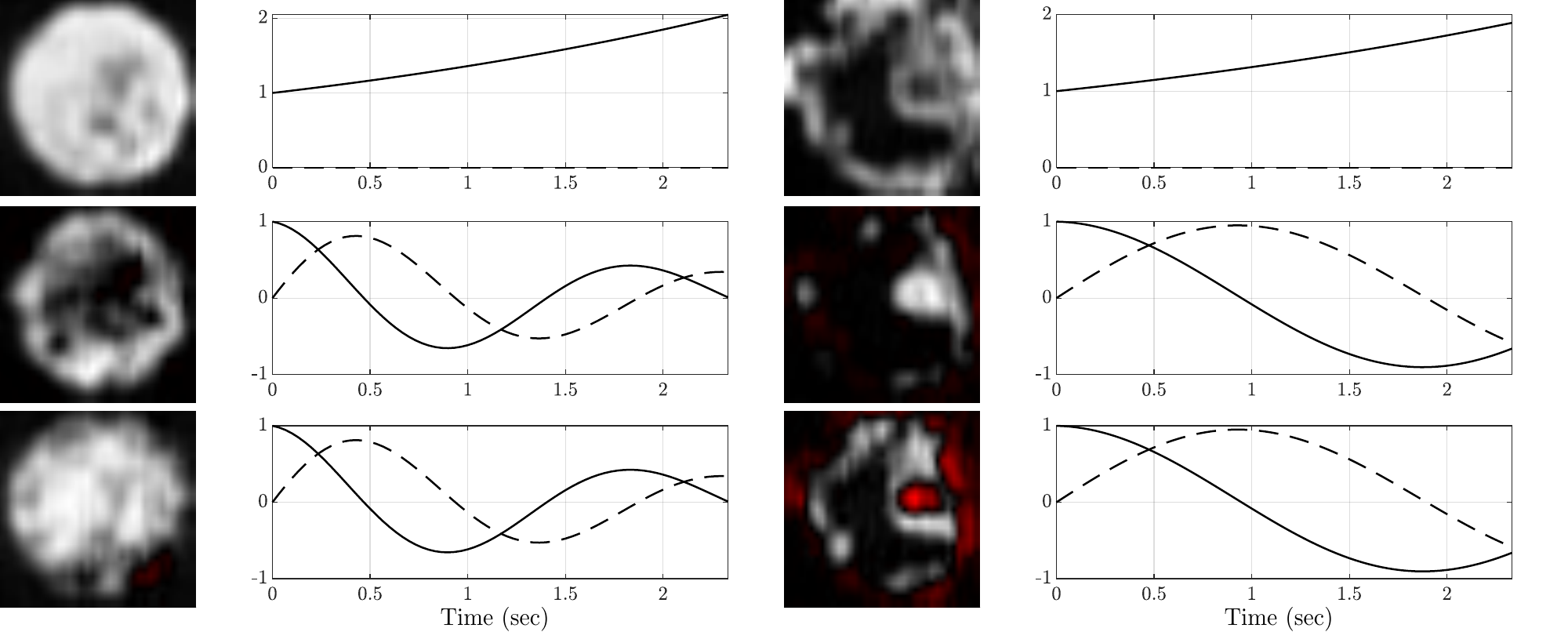}
  \caption{Spatiotemporal modes for LC data. First column: spatial modes of DMMP. Second column: temporal trends of DMMP. Third column: spatial modes of water. Fourth column: temporal trends of water. First, second, and third rows represent the real modes, the real parts of the complex modes, and the imaginary parts of the complex modes, respectively. The solid lines and the dashed lines represent the real part and the imaginary parts of the temporal trends.}\label{fig:lc-result}
\end{figure} 

\section{Case Study}\label{sec:num}
Liquid crystals (LCs) undergo surface-driven ordering transitions in the presence of chemical contaminants. The ordering transitions triggered by different gas-phase analytes produce distinct spatiotemporal (high-dimensional) patterns that can be detected by a polarizing microscope. This selectivity makes LCs flexible sensing platforms \cite{cao2018machine}. In this case study, we study spatiotemporal responses of LCs when exposed to different contaminants (DMMP and water). Spatial snapshots of the raw data are shown in Figure \ref{fig:lc-raw}. Initially, LC fields are black, and light intensity increases as the contaminant diffuses through the LC film. It is clear that DMMP and water produce distinct spatiotemporal patterns. The original data has dimension of $m=34\times 31$ for DMMP and $m=36\times 32$ for water. Both data have $71$ timeframes (snapshots). Algorithm \ref{alg:main} is used to identify the low-rank model, extract system matrices, and obtain the modes and we use $(n,s)=(3,20)$.

Modes for each case (DMMP and water) are visualized in Figure \ref{fig:lc-result}. We show the spatial modes as images, where pixels with positive intensities are colored in white and the pixels with negative intensities are colored in red. The temporal modes are visualized by their temporal trend, $(\lambda_k)^{t/\Delta t}$, where $\Delta t=1/30$ sec is the sampling time. Whenever the eigenpairs are obtained as conjugate pairs, we separately show the real and imaginary parts of the spatial modes. The associated real and imaginary parts of the temporal trends are plotted together. 
For DMMP and water, one real mode and two complex modes (conjugate pairs) are obtained. One can interpret the real dynamic modes as the slow-changing (growth/decay) mode and the complex modes as fast-changing (oscillatory) modes. We see that DMMP has a faster rate for the real mode, and a shorter period of oscillation for the complex modes. This implies that the transition occurs more rapidly with DMMP (this can be visually confirmed). Furthermore, we can see that the patterns of spatial modes are more uniform in DMMP. This implies that DMMP undergoes a uniform transition (the contaminant diffuses in a more homogenous manner). Furthermore, the structure of the spatial mode of DMMP indicates that the transition is started from the boundary and propagates towards the center.

\vspace{-0.1in}
\section{Conclusions}
We have presented equivalence theorems for subspace identification and dynamic mode decomposition and we have exploited these results to develop an algorithm that delivers a provably optimal model and is easy to implement. In future work, we will seek to establish equivalence for non-autonomous dynamic systems and we will seek to apply these data-driven capabilities to model predictive control.

\bibliographystyle{plain}      
\bibliography{low-rank}

\appendix
\section{SID Solves \eqref{eqn:sid}}\label{apx:sid}
We show that a well-known SID model known as UPC (a counterpart of N4SID for autonomous systems \cite{van2012subspace}), solves the optimization problem \eqref{eqn:sid}. Here, we use notation commonly used in SID literature: $A/B = A B^\dag B$ and $A/B^\perp = A (I-B^\dag B)$. UPC identifies $\Gamma$ as $\Gamma=US^{1/2}$, where $(\bY_f/\bY_p)_{(n)}=USV^\top$ is an $n$-truncated SVD of $(\bY_f/\bY_p)$. The following proposition shows that such $\Gamma$ solves \eqref{eqn:sid}.\\

\begin{proposition}\label{prop:n4sid}
  There exists $X\in\mathbb{R}^{n\times (\ell+1)}$ such that $\Gamma=US^{1/2}$ and $X$ are solutions of \eqref{eqn:sid}.
\end{proposition}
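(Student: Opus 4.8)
The plan is to identify UPC's factor $\Gamma = US^{1/2}$ with the left factor of the closed-form DMD solution of Proposition \ref{prop:lrank}(a), and then bring it to the required normalization by a similarity transformation. The whole argument reduces UPC to machinery already established in the excerpt, so the only genuine work is showing that the oblique projection defining UPC is the matrix $Z$ of Proposition \ref{prop:lrank}(a), up to an orthogonal factor.

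First I would rewrite the oblique projection. Using the economic SVD $\bY_p = U_2 S_2 V_2^\top$, the pseudoinverse identity gives $\bY_p^\dag \bY_p = V_2 V_2^\top$, so that $\bY_f/\bY_p = \bY_f \bY_p^\dag \bY_p = (\bY_f V_2) V_2^\top = Z V_2^\top$, where $Z = \bY_f V_2$ is exactly the matrix appearing in Proposition \ref{prop:lrank}(a). Because $V_2$ has orthonormal columns, right-multiplication by $V_2^\top$ preserves singular values and left singular vectors: writing an SVD $Z = U_Z S_Z V_Z^\top$ gives $Z V_2^\top = U_Z S_Z (V_2 V_Z)^\top$, which is an SVD of $Z V_2^\top$ since $V_2 V_Z$ again has orthonormal columns. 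Hence, for a compatible choice of realization, the $n$-truncated SVD $(\bY_f/\bY_p)_{(n)} = U S V^\top$ has $U = U_1$ and $S = S_1$, where $Z_{(n)} = U_1 S_1 V_1^\top$ is the $n$-truncated SVD of $Z$. Consequently UPC sets $\Gamma = U_1 S_1^{1/2}$.

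Next I would produce a solution of \eqref{eqn:sid} whose first component equals $U_1$. Proposition \ref{prop:lrank}(a) supplies the factored DMD solution $\Theta^* = PQ^\top$ with $P = U_1$ and $Q = U_2 S_2^{-1} V_1 S_1$, and Theorem \ref{thm:equiv-2}(b) then guarantees that $(\Gamma, X) = (U_1, Q^\top \bY)$ solves \eqref{eqn:sid}. Finally I would renormalize via the similarity transformation noted below \eqref{eqn:sid}: for any nonsingular $T$, the pair $(\Gamma T^{-1}, TX)$ is again a solution, since $\Gamma X_p = (\Gamma T^{-1})(T X_p)$ leaves the objective invariant and $\row(TX) \subseteq \row(X) \subseteq \row(\bY)$ preserves the constraint. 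Taking $T = S_1^{-1/2}$ yields the solution $(U_1 S_1^{1/2}, S_1^{-1/2} Q^\top \bY)$, whose first component is precisely $\Gamma = U S^{1/2}$, and whose second component $X = S_1^{-1/2} Q^\top \bY \in \mathbb{R}^{n \times (\ell+1)}$ is the desired state sequence.

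I expect the main obstacle to be the first computational step: recognizing that the UPC oblique projection $\bY_f/\bY_p$ collapses to $Z V_2^\top$ and that truncating its SVD returns the same left factor $U_1$ and singular values $S_1$ as truncating $Z$ itself. Once that identification is made, the remaining steps are direct applications of Proposition \ref{prop:lrank}(a), Theorem \ref{thm:equiv-2}(b), and the similarity-transformation invariance. One caveat to flag is that $T = S_1^{-1/2}$ requires the top $n$ singular values of $Z$ to be strictly positive; this holds in the generic full-order, observable case, which is the regime in which UPC is meaningful.
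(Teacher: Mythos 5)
Your proof is correct, and it takes a genuinely different route from the paper's. The paper proves Proposition \ref{prop:n4sid} directly and self-containedly: it splits the SID objective as $\Vert \bY_f/\bY_p-\Gamma X_p\Vert_F^2 + \Vert \bY_f/\bY_p^\perp\Vert_F^2$ (the cross terms vanish because $\row(X_p)\subseteq\row(\bY_p)$), notes the second term is constant, invokes Eckart--Young to conclude that the minimum under $\rank(\Gamma X_p)\le n$ is attained exactly when $\Gamma X_p=(\bY_f/\bY_p)_{(n)}$, and then verifies that $\Gamma=US^{1/2}$ together with $X=S^{-1/2}U^\top\bY_f\bY_p^\dag\bY$ achieves this bound while satisfying the subspace constraint. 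You instead reduce UPC to machinery already established in the body of the paper: the identity $\bY_f/\bY_p=\bY_fV_2V_2^\top=ZV_2^\top$, which shows that truncating $\bY_f/\bY_p$ and truncating $Z$ return the same $U_1,S_1$; then Proposition \ref{prop:lrank}(a) with the factorization $P=U_1$, $Q=U_2S_2^{-1}V_1S_1$; then Theorem \ref{thm:equiv-2}(b); and finally the similarity transformation $T=S_1^{-1/2}$. In fact the two constructions yield the identical state sequence, since $S_1^{-1/2}Q^\top\bY = S_1^{1/2}V_1^\top S_2^{-1}U_2^\top\bY = S^{-1/2}U^\top\bY_f\bY_p^\dag\bY$ (for compatible truncated-SVD realizations). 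What your route buys is conceptual: it exhibits UPC's $\Gamma$ as nothing but a rescaled left factor of the closed-form DMD solution, which reinforces the paper's unification theme. What the paper's route buys is logical independence, which matters, as explained next.

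Two caveats. First, there is a formal circularity you should address: Theorem \ref{thm:equiv-2} is stated under the hypothesis that a solution $(\Gamma^*,X^*)$ of \eqref{eqn:sid} exists, and the paper defers exactly that existence claim to Proposition \ref{prop:n4sid} --- the statement you are proving. The circle is harmless only because the paper's proof of Theorem \ref{thm:equiv-2}(b) never actually uses $(\Gamma^*,X^*)$: it is a pure contradiction argument against the optimality of $\Theta^*$, so part (b) holds with that hypothesis deleted. Your write-up should say this explicitly; as it stands, it rests on a theorem whose stated hypothesis you have not discharged. Second, the invertibility of $S_1$ (equivalently $\rank(Z)\geq n$), which you flag: this is a genuine assumption, but it does not disadvantage your argument relative to the paper's, since the paper's own choice $X=S^{-1/2}U^\top\bY_f\bY_p^\dag\bY$ requires the same condition.
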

\begin{proof}
The objective function of \eqref{eqn:sid} can be rewritten as $\Vert \bY_f/\bY_p-\Gamma X_p\Vert_F^2 + \Vert \bY_f/\bY^\perp_p\Vert_F^2$. The separability of the Frobenius norm follows from the fact that the rows of $\bY_f/\bY_p^\perp$ are orthogonal to the rows of $\bY_f/\bY_p$ and $X_p$ (recall that the row space of $\bY$ should span that of $X$). One can also see that the second term is constant. Since $\rank(\Gamma X)\leq n$, the lower bound of the objective value is attained if $\Gamma X_p = (\bY_f/\bY_p)_{(n)}$. Here, we choose $X=S^{-1/2}U^\top \bY_f\bY_p^\dag\bY$. One can easily see that $\Gamma X_p = (\bY_f/\bY_p)_{(n)}$ and $\row(X)\subseteq \row(\bY)$ hold; therefore, $(\Gamma,X)$ is a solution of \eqref{eqn:sid}.
\end{proof}
\end{document}